\documentclass[journal,letterpaper]{IEEEtran}

\usepackage[utf8]{inputenc} 
\usepackage[T1]{fontenc}
\usepackage{url}
\usepackage{ifthen}
\usepackage{cite}
\usepackage{xcolor}
\usepackage{amssymb}
\usepackage{authblk}
\usepackage{graphicx}
\usepackage{caption}

\usepackage{tabularx}

\usepackage{setspace}

\usepackage[top=2.58cm,bottom=2.75cm,left=1.6cm,right=1.6cm]{geometry}
\usepackage{setspace}

\usepackage[cmex10]{amsmath} 

\newtheorem{theorem}{Theorem}

\newtheorem{definition}{Definition}

\begin{document}

\title{Revisiting Shannon's Source Coding Theorem with Distributional Uncertainty under the Nonlinear Expectation Theory}

	

\setlength{\affilsep}{0.1em}

\author[*$\ddagger$]{Wen-Xuan Lang\thanks{This work is supported in part by the National Key R\&D Program of China under Grant 2023YFA1009604, in part by the Beijing Municipal Natural Science Foundation under Grant L242013, and in part by the Open Project of Key Laboratory of Mathematics and Information Networks (Beijing University of Posts and Telecommunications), Ministry of Education, China, under Grant KF202602. \textit{(Corresponding author: Shaoshi Yang)}}}
\author[$\dagger$]{Shaoshi Yang}
\author[$\dagger$]{Jianhua Zhang}
\author[*$\ddagger$]{Zhiming Ma}
\affil[*]{\small National Center for Mathematics and Interdisciplinary Sciences, Academy of Mathematics and Systems Science, CAS}
\affil[$\dagger$]{School of Information and Communication Engineering, Beijing University of Posts and Telecommunications}
\affil[$\ddagger$]{University of Chinese Academy of Sciences, Academy of Mathematics and Systems Science, Chinese Academy of Science  \authorcr Emails: langwx@amss.ac.cn, \{shaoshi.yang, jhzhang\}@bupt.edu.cn, mazm@amt.ac.cn}

\maketitle


\begingroup
\setstretch{0.95}

\begin{abstract}
   In classical information theory, a source is modeled by a single, precisely known probability distribution. However, in the increasingly complex communication networks full of unanticipated, nonstationary, and heterogeneous random events, the assumption of precise and well-defined probability distributions to describe random variables appears somewhat idealized. Therefore, it is important to characterize the \textit{uncertainty of distributions} of source messages, subject to relaxing the assumption of deterministic probability models for analyzing information sources in information theory. Based on the \textit{nonlinear expectation theory}, a novel axiomatical system that extends classical probability theory, this paper investigates the information sources whose distributions themselves are uncertain, and refers to them as \textit{uncertain-distribution sources}. We generalize the fundamental concept information entropy to \textit{nonlinear information entropy}, which describes the measurement of the amount of information contained in a uncertain-distribution source. By using the strong law of large numbers under sublinear expectation, we establish a \textit{nonlinear source coding theorem}, which not only shows that the nonlinear information entropy is the upper bound for the infimum of achievable coding rate of uncertain-distribution sources under the maximum error probability criterion, but also determines a cluster point of the coding rate of uncertain-distribution sources under the minimum error probability criterion. Our findings reveal that the introduction of nonlinear expectation theory allows for a more comprehensive understanding of information sources.
\end{abstract}

\begin{IEEEkeywords}
	Nonlinear expectation theory, uncertainty of distributions, uncertain-distribution sources, nonlinear source coding theorem.
\end{IEEEkeywords}

\section{Introduction}\label{sec1}

\subsection{Beyond the ``True Distribution'' Assumption}

The mathematical foundation of Shannon's information theory is probability theory, where every source is modeled as a random variable governed by a single, well-defined probability measure \cite{gallager1968information,book}. This model has served as the backbone of communication engineering for decades, but it rests on a critical premise that there exists an objective, stable, and perfectly knowable ``true distribution'' underpinning the source.

In many emerging fields, this premise is increasingly questioned. For instance, in high-frequency financial data, the statistical properties of returns shift continuously and cannot be captured by a single distribution \cite{peng2019nonlinear}; in biological networks, noise sources may exhibit persistent distributional variation that resists conventional modeling. In such systems, the difficulty is not that we lack knowledge of the true distribution, but rather that imposing any single distribution would be an oversimplification. Random variables in these domains exhibit what we call intrinsic distributional uncertainty, that is, they are inherently not describable by one probability measure.

As mentioned in \cite{spiegelhalter2024does}, probability is not an objective property of the world but rather a construct based on subjective judgments and assumptions. In many cases that, for instance, have only limited observations or small sample size, there may not be a single, fixed probability distribution that accurately captures the underlying randomness. This reality challenges the traditional approach of relying on deterministic probability models, which typically assume the existence of an underlying ``true'' probability distribution. Therefore, novel approaches are required for effectively dealing with the distributional uncertainty inherent in information sources.

\textit{Nonlinear expectation theory} is a novel axiomatical system established by Peng \cite{peng2007g,peng2019nonlinear,peng2017theory}. In this theory, a random variable is defined on a sublinear expectation space $(\Omega,\mathcal{H},\mathbb{E})$, where $\mathbb{E}$ is a sublinear functional that acts as a ``worst-case expectation'' over a family of probability measures $\{P_\theta\}_{\theta\in\Theta}$, i.e., $\mathbb{E}[\cdot] = \sup_{\theta\in\Theta} E_{P_\theta}[\cdot]$. Crucially, no single $P_\theta$ is the ``true'' probability measure, the family itself is the complete description of the random variable. 

\begin{table*}[tbp]
	\centering
	\caption{Comparison of different source coding approaches under distribution uncertainty.}
	\label{tab:source_coding_comparison}
	\begin{tabularx}{\textwidth}{|X|X|X|X|X|}
		\hline
		\textbf{Method} & \textbf{Core Assumption} & \textbf{Distribution Type} & \textbf{Theoretical Support} & \textbf{Advantage} \\ \hline
		Our work based on nonlinear expectation theory & Source has intrinsic distributional uncertainty, no single true distribution exists & No distribution, only distributions family as intrinsic description & New compression rate based on nonlinear expectation theory & Robust performance, can achieve classical compression rate \\ \hline
		Classical Shannon analysis method (e.g., \cite{shannon1948mathematical,renyi1961measures,marichal2002entropy,rao2004cumulative}) & Source distribution fully known & Single true distribution & Optimal compression rate for the given distribution & Precise analysis based on probability theory \\ \hline
		Universal coding (e.g., \cite{ziv1972coding1,ziv1972coding2,ziv1977universal,jalali2010universal}) & Source distribution unknown, but a true distribution exists & Single true distribution, or known to belong to a distribution family & Convergence guarantee to the compression rate of the true distribution & Practical algorithm for learning the unknown distribution \\ \hline
		Probabilistic uncertain-distribution coding (e.g., \cite{mark2019the}) & Sender knows true distribution; receiver has only an approximate prior & Single true distribution, or known to belong to a distribution family & Bounds on extra code length based on prior distribution & Can handle imperfect priors at the receiver \\ \hline
	\end{tabularx}
\end{table*}

In this paper, stimulated by the idea of nonlinear expectation theory, we consider the situation where the information sources have distributional uncertainty, which we call \textit{uncertain-distribution sources}. Based on nonlinear expectation theory, the corresponding uncertain probability distributions of an uncertain-distribution source is potentially characterized by $\{p_{\theta}\}_{\theta \in \Theta}$. We deduce a \textit{nonlinear information entropy} as the form of $\underset{\theta \in \Theta }{\sup} \sum _{x}p_{\theta }(x)\log \frac{1}{p_{\theta }(x)}$ from a set of new hypotheses. We also redefine the metric for characterizing the performance of source encoder/decoder by sublinear expectation. Then, based on the strong laws of large numbers under sublinear expectations \cite{zhang2023the}, we establish a nonlinear source coding theorem, which shows that the nonlinear information entropy is the upper bound of the achievable coding rate of uncertain-distribution sources under the maximum error probability criterion, and $\underset{\theta \in \Theta}{\inf} \sum _{x}p_{\theta }(x)\log \frac{1}{V(x)}$ is a cluster point of the coding rate of uncertain-distribution sources under the minimum error probability criterion. These results drop the ``true distribution'' postulate entirely, thus extending information theory into the domain of nonlinear expectations.

\subsection{From Deterministic Probability to Nonlinear Expectation}

Many existing works address the challenge of ``unknown distributions'' (such as \cite{mark2019the,amos1998reliable}, etc.) and have been highly successful and well-established in many practical applications. These approaches typically rely on the assumption of a ``prior'' distribution or the use of an empirical distribution derived from observed data. In contrast, nonlinear expectation theory offers a different approach by not relying on any assumed distribution at all. This distinction is crucial because it allows us to handle situations where no distributional information is available, a scenario that traditional probabilistic methods are not designed to manage.

In \cite{wen2025hypothesis}, we apply the nonlinear expectation theory to compute the extreme error probabilities in a detection problem with uncertain-distribution noise. This method allows us to calculate the maximum and minimum error probabilities by considering over the entire family of distributions, providing values that reflect the combined influence of all distributions within that family, and offering a more comprehensive understanding of the uncertainty inherent in the problem. In \cite{lang2026theoritical}, we modeled channel noise under distributional uncertainty and, by examining a binary detection problem, highlighted the essential differences of the approach based on nonlinear expectation theory from conventional probabilistic analysis.

The introduction of nonlinear expectation theory in our work is not intended to replace established probabilistic methods, but rather to address scenarios where traditional methods struggle, particularly when there is a complete absence of distributional information. For instance, in highly complex or randomly dynamic systems, such as those encountered in financial markets, biological networks, communication networks, or quantum information processing, the behavior of random variables may be influenced by numerous factors that are difficult to model explicitly. The increasing prevalence of complex random systems and the need to model uncertainty more accurately have led to the growing importance of the nonlinear expectation theory. By providing a more general and flexible framework, this theory enables researchers to better understand and handle uncertainty in modern probabilistic modeling. To highlight the novelty of our approach, we compare classical source coding, universal coding, probabilistic uncertain-distribution coding, and our proposed method as shown in Table~\ref{tab:source_coding_comparison}.

\section{Basic Notions of Nonlinear Expectation Theory}\label{sec2}

Peng \cite{peng2007g} established the nonlinear expectation theory, which provides a more general framework for modeling complex stochastic systems. We briefly review the major concepts of nonlinear expectation theory as follows. For details of the definitions and theorems, researchers can refer to \cite{peng2019nonlinear,peng2017theory}.

Let $\mathcal{H}$ be a linear space consisting of real valued functions defined on the sample space $\Omega$. $\mathcal{H}$ satisfies $c\in \mathcal{H}$ for each constant $c \in \mathbb{R}$ and $|X|\in \mathcal{H}$ if $X \in \mathcal{H}$. If $X_1,\cdots, X_n\in \mathcal{H}$, then $\phi(X_1,\cdots, X_n)\in \mathcal{H}$ for each $\phi \in \mathbb{L}^{\infty}(\mathbb{R}^n)$, where $\mathbb{L}^{\infty}(\mathbb{R}^n)$ denotes the space of bounded Borel-measurable functions. The functions in $\mathcal{H}$ are called random variables.

\begin{definition}\label{def:sublinear expectation}
	A sublinear expectation $\mathbb{E}:\mathcal{H} \rightarrow \mathbb{R}$ is a functional defined on the space $\mathcal{H}$ satisfying the following properties:
	\begin{enumerate}
		\item{\textit{Monotonicity}: $\mathbb{E}[X]\geq \mathbb{E}[Y]$, if $X\geq Y$.}
		\item{\textit{Constant preserving}: $\mathbb{E}[c]=c,\forall c \in \mathbb{R}$.}
		\item{\textit{Sub-additivity}: $\mathbb{E}[X+Y]\leq \mathbb{E}[X]+\mathbb{E}[Y]$, $\forall X,Y \in \mathcal{H}$.}
		\item {\textit{Positive homogeneity}: $\mathbb{E}[\lambda X]=\lambda \mathbb{E}[X]$, for $\lambda>0$.}
	\end{enumerate}
	The triplet $(\Omega,\mathcal{H},\mathbb{E})$ is called a sublinear expectation space. If $\mathbb{E}$ satisfies only 1) and 2), then $\mathbb{E}$ is called a nonlinear expectation and $(\Omega,\mathcal{H},\mathbb{E})$ is called a nonlinear expectation space.
\end{definition}

Let $(\Omega,\mathcal{H},\mathbb{E})$ be a sublinear expectation space. Based on the representation theorem of sublinear expectation (Theorem 1.2.1 and 1.2.2 in \cite{peng2019nonlinear}), we can assume $\mathbb{E}$ is the upper expectation with respect to a family of probability measures $\mathcal{P}=\{P_{\theta}\}_{\theta \in \Theta}$, i.e. $\mathbb{E}[\cdot]=\underset{\theta\in \Theta}{\sup}E_{P_{\theta}}[\cdot]$. We call the family $\mathcal{P}$ the uncertain probability measures associated with the sublinear expectation $\mathbb{E}$. For a given $n$-dimensional random variable $X$ defined on a sublinear expectation space $(\Omega,\mathcal{H},\mathbb{E})$, the probability measure family $\mathcal{P}$ gives rise to a family of probability distributions $\{ F_X(\theta,A)=P_{\theta}(X\in A) , \ A\in \mathcal{B}(\mathbb{R}^n) \}_{\theta\in \Theta}$. For notational convenience, we also denote this family of probability distributions as $\{p_{\theta}(X)\}_{\theta \in \Theta}$. In such a case, we claim the corresponding uncertain probability distributions of $X$ are $\left\{p_{\theta }(X)\right\}_{\theta \in \Theta }$.

In order to analyze information sources under the framework of nonlinear expectation theory, it is necessary to introduce the strong laws of large numbers under sublinear expectations \cite{zhang2023the}, which is shown as the following Theorem \ref{thm:SLLN}. Let $V(A)=\underset{\theta \in \Theta}{\sup}P_{\theta}(A)$ ,$v(A)=\underset{\theta \in \Theta}{\inf}P_{\theta}(A)$,\footnote{If interested in the details of the pair $(V,v)$ and the strong laws of large numbers under sublinear expectations, one can refer to \cite{zhang2023the}.} and 
\begin{equation}
	C_V(X):=\int_{0}^{\infty}V(X\geq t)dt+\int_{-\infty}^{0}[V(X\geq t)-1]dt.
\end{equation}

\begin{theorem}\label{thm:SLLN}
	Let $\{X_i\}_{i=1}^{\infty}$ be a sequence of IID random variables defined on a sublinear expectation space $(\Omega,\mathcal{H},\mathbb{E})$ where $\mathbb{E}[\cdot]=\underset{\theta \in \Theta}{\sup} E_{P_{\theta}}[\cdot]$. Suppose that $C_V(|X_1|)<\infty$, $\{P_{\theta}\}_{\theta\in \Theta}$ is a countably-dimensional weakly compact family of probability measures on $(\Omega,\sigma(\mathcal{H}))$ in the sense that, for any bounded $Y_1,Y_2,\cdots \in \mathcal{H}$ and any sequence $\{P_n\}\subset \{P_{\theta}\}_{\theta\in \Theta}$, there is a subsequence $\{n_k\}$ and a probability measure $P\in \{P_{\theta}\}_{\theta\in \Theta}$ such that
	\begin{equation}
	\lim_{k\rightarrow\infty} P_{n_k}(\phi(Y_1,\cdots,Y_d)) =P(\phi(Y_1,\cdots,Y_d)).
	\end{equation}
	Denote by $C(\{x_n\})$ the set of cluster points\footnote{A cluster point of a sequence $\mathcal S$ is a point $x$, to which there is a subsequence of $\mathcal S$ that converges.} of the sequence $\{x_n\}$. Then, for any $b\in [-\mathbb{E}[-X_1],\mathbb{E}[X_1]]$ we have
	\begin{equation}
	V\left(b\in C\left(\left\{\frac{\sum_{i=1}^{n}X_i}{n}\right\}\right)\right) = 1.
	\end{equation}
\end{theorem}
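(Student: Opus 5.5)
The plan is to reduce the statement to a two-sided oscillation result and then use a discrete intermediate-value argument. Write $S_n=\sum_{i=1}^n X_i$, $\overline{\mu}=\mathbb{E}[X_1]$ and $\underline{\mu}=-\mathbb{E}[-X_1]$.

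I would aim to show that there is a single probability measure $Q\in\{P_\theta\}_{\theta\in\Theta}$ under which, $Q$-almost surely,
\begin{equation*}
\liminf_{n\to\infty}\frac{S_n}{n}=\underline{\mu},\qquad \limsup_{n\to\infty}\frac{S_n}{n}=\overline{\mu},\qquad \frac{X_n}{n}\to 0 .
\end{equation*}
On this event the increments satisfy
\begin{equation*}
\frac{S_{n+1}}{n+1}-\frac{S_n}{n}=\frac{X_{n+1}}{n+1}-\frac{S_n}{n(n+1)}\to 0 ,
\end{equation*}
because $S_n/n$ stays bounded there. A real sequence whose steps tend to zero has every point between its $\liminf$ and $\limsup$ as a cluster point. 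Hence every $b\in[\underline{\mu},\overline{\mu}]$ lies in $C(\{S_n/n\})$ $Q$-a.s. Since $V\ge Q$ and $V\le 1$, this gives $V(b\in C(\{S_n/n\}))=1$.

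\textbf{Step 1: moment control and the tail statement.} I would first record the consequences of $C_V(|X_1|)<\infty$. Identical distribution under $\mathbb{E}$ gives
\begin{equation*}
\sum_n V(|X_n|\ge \epsilon n)=\sum_n V(|X_1|\ge\epsilon n)<\infty .
\end{equation*}
Every $P_\theta$ is dominated by $V$, so Borel--Cantelli under each single measure gives $X_n/n\to 0$ $P_\theta$-a.s. for every $\theta$, in particular for the $Q$ built below. Next I would truncate, setting $\bar X_i=(-i)\vee X_i\wedge i$. The truncations are bounded, so they lie in $\mathcal{H}$ and can be fed into the weak-compactness hypothesis. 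Moreover $\mathbb{E}[\bar X_i]\to\overline{\mu}$ and $-\mathbb{E}[-\bar X_i]\to\underline{\mu}$, and $\sum_i \mathbb{E}[\bar X_i^2]/i^2<\infty$ by the usual summation argument with $C_V$ in place of the classical mean.

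\textbf{Step 2: constructing $Q$ by blocks.} Choose rapidly increasing block endpoints $n_1<n_2<\cdots$ with $n_{k-1}/n_k\to 0$. On odd blocks I want the conditional mean of each $\bar X_i$, given the past, to be essentially $\mathbb{E}[\bar X_i]$; on even blocks, essentially $-\mathbb{E}[-\bar X_i]$. Independence in Peng's sense means
\begin{equation*}
\mathbb{E}[\phi(X_1,\dots,X_{i-1},X_i)]=\mathbb{E}\bigl[\mathbb{E}[\phi(x,X_i)]\big|_{x=(X_1,\dots,X_{i-1})}\bigr].
\end{equation*}
This lets me realise such a choice on the first $m$ coordinates by a measure $P^{(m)}\in\{P_\theta\}$, up to an error $2^{-m}$, by choosing near-maximisers step by step. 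Applying the countably-dimensional weak compactness to the countable family of bounded functionals $\phi(\bar X_1,\dots,\bar X_d)$, a subsequence of $P^{(m)}$ converges on all of them to some $Q$ in the family. That $Q$ inherits the prescribed conditional means: martingale differences $\bar X_i-E_Q[\bar X_i\mid\mathcal{F}_{i-1}]$ whose conditional means are at least $\mathbb{E}[\bar X_i]-2^{-i}$ on odd blocks and at most $-\mathbb{E}[-\bar X_i]+2^{-i}$ on even blocks.

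\textbf{Step 3: limits under $Q$.} The martingale SLLN (Kronecker's lemma plus the $L^2$ bound from Step 1) shows that the normalised martingale part vanishes $Q$-a.s. Since $n_{k-1}/n_k\to 0$, $S_{n_k}/n_k$ is dominated by the last block, so it tends to $\overline{\mu}$ along odd $k$ and to $\underline{\mu}$ along even $k$. The truncation error is negligible by Step 1. Also $\underline{\mu}\le\liminf S_n/n\le\limsup S_n/n\le\overline{\mu}$ under every $P_\theta$, by the one-sided SLLN bound, itself a consequence of the same martingale argument with the conditional means bounded by $\overline{\mu}$. Combined with Step 1, this yields the three displayed properties.

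The main obstacle is Step 2. I must pass from finite-dimensional near-optimal measures to a limit $Q$ that still lies in the family and still has the right conditional means in the limit. Conditional expectations are not continuous under weak convergence, so I would encode the block structure through bounded test functions $\phi$ of finitely many truncated coordinates, such as exponential-martingale functionals, and verify the required inequalities only through such expectations.
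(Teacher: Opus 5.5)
The paper gives no proof of this theorem; it is quoted from Zhang's work on strong laws of large numbers under sublinear expectations. So your proposal has to stand on its own.

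Its architecture is sound:
\begin{itemize}
\item Step 1 is correct. The tail series is at most $C_V(|X_1|)/\epsilon$, and $E_{P_\theta}[Y]\le C_V(Y)$ for $Y\ge 0$ gives the truncation estimates.
\item Step 3 is correct once a suitable $Q$ is available, including the one-sided bound under every $P_\theta$.
\item The closing argument is fine: the steps of $S_n/n$ tend to zero, so everything between $\liminf$ and $\limsup$ is a cluster point, and $V\ge Q$.
\end{itemize}

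The genuine gap is Step 2. You flag it but do not close it, and as sketched it would not work, for three reasons.
\begin{itemize}
\item ``Choosing near-maximisers step by step'' through the iterated independence formula means selecting, for each value of the past, a near-optimal law for $X_i$ and gluing these into one measure. Nothing in the hypotheses makes $\{P_\theta\}_{\theta\in\Theta}$ stable under such pasting, so the glued measure need not belong to the family.
\item Near-optimality of an expectation only controls conditional means in $L^1$, not pointwise. So the a.s.\ bounds $\mathbb{E}[\bar X_i]-2^{-i}$ are not justified.
\item The ``exponential-martingale functionals'' remedy is left unspecified.
\end{itemize}

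The missing idea is that no continuity of conditional expectations under weak limits is needed. The one-sided bound you already use in Step 3 holds under every member of the family. With $\mathcal{F}_{i-1}=\sigma(X_1,\dots,X_{i-1})$, for every $P$ in the family,
\[
-\mathbb{E}[-\bar X_i]\le E_P[\bar X_i\mid\mathcal{F}_{i-1}]\le\mathbb{E}[\bar X_i]\quad P\text{-a.s.}
\]
To see this, test independence against $g(X_1,\dots,X_{i-1})(\bar X_i-\mathbb{E}[\bar X_i])$ with $g\ge 0$.

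The repair then runs as follows.
\begin{itemize}
\item Fix signs $\epsilon_i=\pm1$ according to the blocks. Independence gives $\mathbb{E}[\sum_{i\le m}\epsilon_i\bar X_i]=\sum_{i\le m}\mathbb{E}[\epsilon_i\bar X_i]$.
\item Take any single $P^{(m)}$ in the family with $E_{P^{(m)}}[\sum_{i\le m}\epsilon_i\bar X_i]\ge\sum_{i\le m}\mathbb{E}[\epsilon_i\bar X_i]-2^{-m}$. No pasting is needed.
\item Since $E_{P^{(m)}}[\epsilon_i\bar X_i]\le\mathbb{E}[\epsilon_i\bar X_i]$ termwise, the same bound holds for every partial sum over $i\le m_0\le m$.
\item These partial sums are bounded continuous functionals of finitely many $\bar X_i$, so they survive the weak-compactness limit $Q$. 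Combined with the termwise upper bound, this gives $E_Q[\sum_{i\le m_0}\epsilon_i\bar X_i]=\sum_{i\le m_0}\mathbb{E}[\epsilon_i\bar X_i]$ for all $m_0$.
\item Set $D_i:=\mathbb{E}[\epsilon_i\bar X_i]-E_Q[\epsilon_i\bar X_i\mid\mathcal{F}_{i-1}]$. Then $D_i\ge 0$ and $\sum_{i\le m_0}E_Q[D_i]=0$, so $D_i=0$ $Q$-a.s.
\end{itemize}
Under $Q$ the conditional means are therefore exactly $\mathbb{E}[\bar X_i]$ on odd blocks and $-\mathbb{E}[-\bar X_i]$ on even ones, and your Step 3 goes through unchanged.

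If you choose the signs with asymptotic density $\lambda$ instead of in blocks, you even get $V(S_n/n\to b)=1$ for $b=\lambda\mathbb{E}[X_1]-(1-\lambda)\mathbb{E}[-X_1]$.
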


\section{Characterization of Uncertain-Distribution Information Sources}\label{sec3}

Within the framework of nonlinear expectation theory, we assume that the random variables describing the messages sent by information sources are defined on the sublinear expectation space $(\Omega,\mathcal{H},\mathbb{E})$. In this case the information source is referred to as \textit{uncertain-distribution source}, as shown in Fig. \ref{fig_1}. The implicit meaning is that there are higher-level uncertainties inherent in the distributions of the random variables describing information sources, which do not obey a deterministic probability distribution. Therefore, it is necessary to use families of probability distributions to characterize these individual random variables.

Because of the ambiguities associated with the probability models themselves, it is impossible to study uncertain-distribution source using a deterministic probability framework. As a result, the fundamental concept in classical information theory, i.e., information entropy, is not directly applicable, and needs to be reformulated under the framework of the nonlinear expectation theory.

To elaborate a little further, for a random variable defined on a sublinear expectation space, we aim to consider the amount of information it contains from the standpoint of the circumstance with the highest uncertainty. Let $\hat{H}(X)$ denote the measurement of the amount of information contained in $X$, where $X$ is a discrete random variable on a sublinear expectation space $(\Omega,\mathcal{H},\mathbb{E})$ and the corresponding uncertain probability distributions of $X$ are characterized by $\left\{p_{\theta }(X)\right\}_{\theta \in \Theta }$. In order to reduce the mismatching error between the probabilistic model of a system and the practical system itself, it is crucial to carefully analyze a series of random variables that have unknown distributions. Especially, if the family of distributions of a random variable contains the uniform distribution, then the uniform distribution should be used for analyzing the amount of information embedded in the random variable, because uniform distribution represents the highest uncertainty case. Furthermore, it makes intuitive sense that the amount of information embedded in a random variable increases with the size of the probability distribution family the random variable may follow. In light of the above discussions, it is reasonable to make the following assumptions:

\textit{Assumptions:}

\begin{enumerate}
	\item{$\hat{H}(X)\leq \underset{\theta \in \Theta }{\sup } \sum _{x}p_{\theta }(x)\log \frac{1}{p_{\theta }(x)}$ and $\hat{H}(X)$ is continuous with each distribution in the set $ \left\{p_{\theta }(X)\right\}_{\theta \in \Theta }$. }
	\item{If the number of outcomes for $X$ is $N$ and the uniform distribution $p(i)=\frac{1}{N},i=1,2,\cdots,N$, is a member of the family $\{p_{\theta }(X)\}_{\theta \in \Theta }$, then $\hat{H}\left(X\right)=\log N$.}
	\item{Let $X$ and $Y$ be two random variables defined on the sublinear expectation spaces $(\Omega,\mathcal{H},\mathbb{E}_1)$ and $(\Omega,\mathcal{H},\mathbb{E}_2)$, respectively. If the distribution of $Y$ is stronger than that of $X$, then $\hat{H}(X)\leq \hat{H}(Y)$.}
	\item{If a random choice\footnote{Here the meaning of ``random choice" can be understood as an action that has multiple possible outcomes, as detailed in Section 6 of \cite{shannon1948mathematical}.} can be broken down into two successive random choices, the original $\hat{H}$ corresponding to the single random choice should be no greater than the supremum of the weighted sum of individual values of $\hat{H}$  corresponding to the two successive random choices.}
\end{enumerate}

\begin{figure}[tbp]
	\centering
	\includegraphics[scale=0.36]{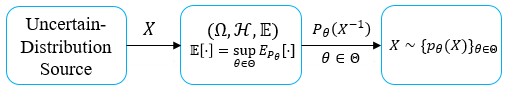}
	\captionsetup{font={scriptsize}}
	\caption{An illustrative example of uncertain-distribution source.}
	\label{fig_1}
\end{figure}

Based on the above four assumptions, one can prove the following theorem, which presents the mathematical expression of the amount of information contained in random variables defined on a sublinear expectation space. Without loss of generality, in this paper we consider only discrete random variables.

\begin{theorem}\label{theentropy}
	Let $X$ be a discrete random variable on a sublinear expectation space $(\Omega,\mathcal{H},\mathbb{E})$. The corresponding uncertain probability distributions of $X$ are $\left\{p_{\theta }(X)\right\}_{\theta \in \Theta }$. Then the amount of information contained in $X$ can be expressed as $\hat{H}\left(X\right)=\underset{\theta \in \Theta }{\sup } \sum _{x}p_{\theta }(x)\log \frac{1}{p_{\theta }(x)}$.
\end{theorem}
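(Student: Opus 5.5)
The argument is a sandwich. Assumption 1) already gives $\hat{H}(X)\le \sup_{\theta}\sum_x p_\theta(x)\log\frac{1}{p_\theta(x)}$, so the work is the reverse inequality. I would prove $\hat{H}(X)\ge H(p_\theta)$ for every fixed $\theta\in\Theta$, where $H(p_\theta)=\sum_x p_\theta(x)\log\frac{1}{p_\theta(x)}$, and then take the supremum over $\theta$. To bound $\hat{H}(X)$ from below by one member of the family, I would compare $X$, via Assumption 3), with an auxiliary random variable $Y$ whose distribution family is weaker (contained in that of $X$). The natural choice is a variable carrying the single distribution $p_\theta$, i.e.\ a classical source, for which $\hat{H}(Y)\le\hat{H}(X)$. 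It then suffices to show that for a singleton family $\hat{H}$ coincides with Shannon entropy. I am reading ``the distribution of $X$ is stronger than that of $Y$'' as inclusion of the families.

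For a singleton family $\{p\}$ I would rerun Shannon's axiomatic derivation (Section 6 of \cite{shannon1948mathematical}) with Assumptions 2)–4) in place of his axioms. First let $A(N)=\hat{H}$ of the uniform distribution on $N$ outcomes; Assumption 2) gives $A(N)=\log N$ directly, so no monotonicity or functional-equation argument is needed. Next take $p$ rational, $p_i=n_i/\sum_j n_j$. Splitting a uniform choice among $\sum_j n_j$ outcomes into a choice of group $i$ with probability $p_i$ followed by a uniform choice inside the group is exactly the decomposition in Assumption 4). For a singleton family the ``supremum of the weighted sum'' is just the weighted sum, so this gives
\begin{equation*}
\log \textstyle\sum_j n_j \le \hat{H}(p)+\sum_i p_i\log n_i ,
\end{equation*}
that is, $\hat{H}(p)\ge -\sum_i p_i\log p_i$. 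The matching upper bound for the singleton comes from Assumption 1) itself. Finally, the continuity in Assumption 1) extends $\hat{H}(p)=H(p)$ from rational $p$ to all distributions. This explains the precise form of the assumptions: only the inequality in 4) is needed, because 1) supplies the other direction.

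Combining, $H(p_\theta)=\hat{H}(Y_\theta)\le \hat{H}(X)$ for each $\theta$. Taking the supremum and using 1) gives equality. The infinite countable alphabet case would follow by truncating to finitely many outcomes and letting the truncation grow, again using continuity.

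The main obstacle is making Assumption 3) do what I need. That requires constructing, for each $\theta$, a sublinear expectation space $(\Omega,\mathcal{H},\mathbb{E}_\theta)$ with $\mathbb{E}_\theta=E_{P_\theta}$ on which a copy $Y_\theta$ of $X$ has the single distribution $p_\theta$, and checking that this is ``weaker'' in the paper's sense. I would build it by restricting the representing family $\mathcal{P}$ to $\{P_\theta\}$. A related point to handle carefully is that in Assumption 4) the intermediate choices must themselves be admissible singleton-family variables, so the decomposition argument stays entirely within classical distributions.
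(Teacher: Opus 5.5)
Your proof is correct, but it is organized differently from the paper's. As far as the sketch indicates, the paper reruns Shannon's counting argument on the whole family, with worst-case choices inserted at the grouping step (the supremum in Assumption 4). It then only remarks that the singleton case recovers Shannon entropy.

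You never decompose a genuinely uncertain choice. Assumption 4 is used only for singleton families, where ``the supremum of the weighted sum'' is just the weighted sum. Together with Assumption 2 and the bound and continuity in Assumption 1, this gives $\hat{H}=H$ on singletons. The supremum is then produced by Assumption 3, via $H(p_\theta)=\hat{H}(Y_\theta)\le\hat{H}(X)$.

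What your route buys is a clean division of labour: Assumption 1 gives the upper bound and continuity, Assumptions 2 and 4 settle the rational singleton case, and Assumption 3 passes to families. So you never have to give precise meaning to the worst-case form of Assumption 4, or decompose a family of rational distributions whose denominators and group weights vary with $\theta$. What it costs is that the whole lower bound rests on Assumption 3 and on restricting $\mathbb{E}$ to a single $E_{P_\theta}$ on the same $(\Omega,\mathcal{H})$. That restriction is legitimate, since $E_{P_\theta}[|Z|]\le\mathbb{E}[|Z|]<\infty$ for $Z\in\mathcal{H}$ and a linear expectation is in particular sublinear. The paper's route instead aims to show the supremum emerging from the grouping axiom itself.

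Two small remarks. First, in Peng's framework ``the distribution of $Y$ is stronger than that of $X$'' usually means $\mathbb{E}_2[\varphi(Y)]\ge\mathbb{E}_1[\varphi(X)]$ for all bounded Lipschitz $\varphi$, not inclusion of families. The only instance you need (one $P_\theta$ against the whole family) holds under that reading too, because $E_{P_\theta}[\varphi(X)]\le\mathbb{E}[\varphi(X)]$, so your step does not depend on the interpretation. In fact your inclusion reading is the one that keeps the assumptions consistent. Under Peng's reading, Assumption 3 would force equal $\hat{H}$ for the point masses $\{\delta_0,\delta_1\}$ on $\{0,1\}$ and for their convex hull, since both induce the same sublinear expectation. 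Yet Assumptions 1 and 2 force $\hat{H}\le 0$ and $\hat{H}=\log 2$ respectively.

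Second, your one-line extension to countably infinite alphabets needs the continuity in Assumption 1 to be understood along truncations, not in total variation, since $H$ itself is discontinuous in that topology on an infinite alphabet. The paper's Shannon-style sketch is likewise only a finite-alphabet argument.
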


\textit{Proof sketch}: The proof follows Shannon's original counting argument but replaces fixed distributions by worst-case choices. When the distribution family is a singleton, it recovers the classical Shannon entropy. Details are omitted due to space and can be found in the extended version~\cite{lang2026nonlinear}.

\section{Nonlinear Source Coding Theorem}\label{sec4}

Similar to the case in classical information theory, for uncertain-distribution sources, encoding and decoding are necessary in order to reduce information redundancy and improve communication efficiency. In this context, \textit{nonlinear source coding} refers to the process of encoding and decoding for uncertain-distribution sources. It is important to note that uncertain-distribution sources are defined on a sublinear expectation space $(\Omega,\mathcal{H},\mathbb{E})$, which implies that the sources have distribution uncertainty. Therefore, the probability models for analyzing the associated problems are uncertain as well, and the performance measure of the transmission process need to be redefined.

The encoding and decoding in nonlinear source coding processes can still be represented as functions. Specifically, a $(||\mathcal{W}||,n,f_n,g_n)$ nonlinear source code consists of:
\begin{itemize}
	\item[$\bullet$] A source encoding function $f_n\colon \mathcal{X}^{n}\rightarrow \mathcal{W}$ that assigns an index $W \in \mathcal{W}$ to each $\boldsymbol{X}^n\in \mathcal{X}^n$.
	\item[$\bullet$] A source decoding function $g_n\colon \mathcal{W}\rightarrow \mathcal{X}^{n}$ that assigns an estimate $\hat{\boldsymbol{X}}^n\in \mathcal{X}^n$ to each index $W\in \mathcal{W}$.
\end{itemize}
The source coding rate $R_{\textrm{s}}$ is defined as $\frac{\log \|\mathcal{W} \|}{n}$.

Within the framework of nonlinear expectation theory, due to the existence of the family of uncertain probability measures $\{P_{\theta}\}_{\theta \in \Theta}$  associated with the sublinear expectation, the performance of the above nonlinear source code is measured by the maximum and the minimum probabilities of the event that the estimate of the message is different from the message actually sent. Here, these probabilities are calculated as
\begin{equation}
\mathbb{E}[I_{\{\hat{\boldsymbol{X}}^n\neq \boldsymbol{X}^n\}}]=\sup_{\theta \in \Theta } P_{\theta}(\hat{\boldsymbol{X}}^n\neq \boldsymbol{X}^n)=\sup_{\theta \in \Theta } P_{e,\theta}^{(n)},
\end{equation}
\begin{equation}
\mathcal{E}[I_{\{\hat{\boldsymbol{X}}^n\neq \boldsymbol{X}^n\}}]=\inf_{\theta \in \Theta } P_{\theta}(\hat{\boldsymbol{X}}^n\neq \boldsymbol{X}^n)=\inf_{\theta \in \Theta } P_{e,\theta}^{(n)}.
\end{equation}
Intuitively, the above mentioned terms ``maximum probability'' and ``minimum probability'' correspond to the \textit{conservative} and the \textit{aggressive} strategies of designing nonlinear source code, respectively. Based on this insight, there are two criteria for evaluating the performance of nonlinear source code, namely the maximum error probability criterion and the minimum error probability criterion. The former refers to ensuring $\mathbb{E}[I_{\{\hat{\boldsymbol{X}}^n\neq \boldsymbol{X}^n\}}]$ to be sufficiently small, while the latter refers to ensuring $\mathcal{E}[I_{\{\hat{\boldsymbol{X}}^n\neq \boldsymbol{X}^n\}}]$ to be sufficiently small.

For a discrete IID source sequence $X_{1},X_{2}, \cdots,X_{n},\cdots$ on a sublinear expectation space, it is important to study the fundamental limit of its source coding rate. Note that $\hat{H}(X_1) = \hat{H}(X_2) = \cdots = \hat{H}(X_n) = \cdots$ holds true, since we consider an IID source sequence. We also consider the case where $V$ satisfies the following uniform boundedness condition: there exists a constant $M<\infty$ such that for every $n\geq1$, $\sum_{\boldsymbol{x}^{n}\in \mathcal{X}^{n}}V(\boldsymbol{x}^{n})\leq M$.

\begin{theorem}\label{thesource}
	Let $X_{1},X_{2},\cdots,X_{n},\cdots$ be a discrete IID source sequence defined on a sublinear expectation space $(\Omega,\mathcal{H},\mathbb{E})$ where $\mathbb{E}[\cdot]=\underset{\theta \in \Theta}{\sup} E_{P_{\theta}}[\cdot]$. The corresponding uncertain probability distributions of $X_1$ are $\left\{p_{\theta }(X_1)\right\}_{\theta \in \Theta }$. $(V,v)$ is the pair of capacities generated by $\mathbb{E}$. Suppose that $C_V(|\log \frac{1}{V(X_i)}|)<\infty$, $\{P_{\theta}\}_{\theta\in \Theta}$ is a countably-dimensional weakly compact family of probability measures on $(\Omega,\sigma(\mathcal{H}))$ (see Theorem \ref{thm:SLLN}), and $V$ satisfies the uniform boundedness condition. Then:
		\begin{enumerate}
			\item {For any $R_{\textrm{s}} > \underset{\theta \in \Theta}{\inf} \underset{x}{\sum}p_{\theta }(x)\log \frac{1}{V(x)}$ and any $\epsilon$ of positive value, there is a sufficiently large $n$ and a $(\|\mathcal{W} \|,n,f_n,g_n)$ nonlinear source code with source coding rate $R_{\textrm{s}}$ such that $\mathcal{E}\left[I_{\{\hat{\boldsymbol{X}}^n\neq \boldsymbol{X}^n\}}\right]< \epsilon$;}
			\item {For any $R_{\textrm{s}} > \hat{H}(X_{1})$ and any $\epsilon$ of positive value, there is a sufficiently large $n$ and a $(\|\mathcal{W} \|,n,f_n,g_n)$ nonlinear source code with source coding rate $R_{\textrm{s}}$ such that $\mathbb{E}\left[I_{\{\hat{\boldsymbol{X}}^n\neq \boldsymbol{X}^n\}}\right]< \epsilon$.}
	\end{enumerate}
\end{theorem}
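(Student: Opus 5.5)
Both parts use a typical-set code in which $f_n$ indexes a set $A_n\subseteq\mathcal{X}^n$, $g_n$ returns the indexed sequence, and every sequence outside $A_n$ goes to a dummy index. An error then occurs only if $\boldsymbol{X}^n\notin A_n$, so $P_{e,\theta}^{(n)}\le P_\theta(\boldsymbol{X}^n\notin A_n)$. The sets $A_n$ are built from the upper capacity $V$ rather than from any single $p_\theta$. I write $V(\boldsymbol{x}^n)=V(\boldsymbol{X}^n=\boldsymbol{x}^n)$ and read the IID property under $\mathbb{E}$ as the factorization $V(\boldsymbol{x}^n)=\prod_i V(x_i)$ on cylinder events; this reading is what makes the argument work. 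Put $Y_i=\log\frac{1}{V(X_i)}$. These are IID on $(\Omega,\mathcal{H},\mathbb{E})$ with $C_V(|Y_1|)<\infty$, and
\[
-\mathbb{E}[-Y_1]=\inf_\theta\sum_x p_\theta(x)\log\frac{1}{V(x)},\qquad \mathbb{E}[Y_1]=\sup_\theta\sum_x p_\theta(x)\log\frac1{V(x)}.
\]
Theorem~\ref{thm:SLLN} therefore applies to $\{Y_i\}$. Its companion statements in \cite{zhang2023the} also give $v\big(\liminf_n \frac1n\sum Y_i\ge -\mathbb{E}[-Y_1]\big)=1$ and $v\big(\limsup_n \frac1n\sum Y_i\le \mathbb{E}[Y_1]\big)=1$. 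I will use these bounds as well.

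\textbf{Part 1.} Fix $\delta>0$ with $R_{\rm s}>\underline{\mu}+2\delta$, where $\underline{\mu}=-\mathbb{E}[-Y_1]$. Set $A_n=\{\boldsymbol{x}^n: V(\boldsymbol{x}^n)\ge 2^{-n(\underline{\mu}+\delta)}\}$. By the uniform boundedness condition,
\[
|A_n|\,2^{-n(\underline{\mu}+\delta)}\le\sum_{\boldsymbol{x}^n}V(\boldsymbol{x}^n)\le M,
\]
so $\frac1n\log(|A_n|+1)\le\underline{\mu}+\delta+\frac{\log (M+1)}{n}<R_{\rm s}$ for $n$ large, and the code fits at rate $R_{\rm s}$. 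By factorization, $\boldsymbol{X}^n\in A_n$ exactly when $\frac1n\sum_{i=1}^n Y_i\le\underline{\mu}+\delta$. Theorem~\ref{thm:SLLN} with $b=\underline{\mu}$ gives $V(\underline{\mu}\in C(\{\frac1n\sum Y_i\}))=1$. I then need to convert this into
\[
\inf_\theta P_\theta\Big(\tfrac1n\sum Y_i>\underline{\mu}+\delta\Big)\to0 .
\]
The natural choice is a $\theta^*$ attaining the infimum, which exists by weak compactness, under which the $Y_i$ behave like an IID sequence with mean $\underline{\mu}$. The classical weak law for $P_{\theta^*}$ then gives $P_{\theta^*}(\boldsymbol{X}^n\notin A_n)\to0$, and hence $\mathcal{E}[I]\le P_{\theta^*}(\cdot)<\epsilon$. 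This is the main obstacle. Under a general $P_\theta$ in the family the $X_i$ need not be independent, so I would try to show via the weak-compactness/countable-dimensionality hypothesis that the family contains a product measure $p_{\theta^*}^{\otimes\infty}$. If that fails, I would fall back on the cluster-point statement and extract, along a subsequence $n_k$, the statement $v$-type $\limsup$ bound. That subsequence version is exactly the ``cluster point'' phrasing used in the abstract, so it suffices for the existence of some large $n$ as claimed.

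\textbf{Part 2.} The same construction with threshold $\overline{\mu}=\mathbb{E}[Y_1]$ replacing $\underline{\mu}$ gives a rate $\overline{\mu}+\delta$ code with
\[
\sup_\theta P_\theta\Big(\tfrac1n\sum Y_i>\overline{\mu}+\delta\Big)=V\Big(\tfrac1n\sum Y_i>\overline{\mu}+\delta\Big)\to0 .
\]
This limit is the upper-deviation half of the sublinear LLN: it follows from $v(\limsup\le\overline{\mu})=1$, i.e. $V(\limsup>\overline{\mu})=0$, together with continuity of $V$ from above on the weakly compact family. It remains to compare $\overline{\mu}$ with $\hat H(X_1)$. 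Since $p_\theta(x)\le V(x)$ for every $\theta$, we get
\[
\sum_x p_\theta(x)\log\frac1{V(x)}\le\sum_x p_\theta(x)\log\frac1{p_\theta(x)}.
\]
Taking suprema and using Theorem~\ref{theentropy} gives $\overline{\mu}\le\hat H(X_1)$. Hence any $R_{\rm s}>\hat H(X_1)$ exceeds $\overline{\mu}+\delta$ for small $\delta$, and padding the index set $\mathcal{W}$ up to size $2^{nR_{\rm s}}$ yields $\mathbb{E}[I_{\{\hat{\boldsymbol{X}}^n\neq\boldsymbol{X}^n\}}]<\epsilon$ for $n$ large.
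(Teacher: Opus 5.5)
\textbf{Part 1.} Your construction (a $V$-typical set counted via uniform boundedness) matches the paper's. However, the step you flag as the main obstacle is a genuine gap, and neither proposed way around it works as stated.

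\emph{The product-measure route.} You would need a measure in the family under which $X_1,\dots,X_n$ are IID with law $p_{\theta^*}$, and you give no argument for this. A $P_\theta$ minimizing $E_{P_\theta}[Y_1]$ constrains only the law of $X_1$.

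\emph{The fallback.} This is exactly what the paper asserts: citing the sublinear SLLN, it claims a subsequence with $V(|S_{n_k}/n_k-\mu|\le\epsilon)\ge 1-\epsilon$, then uses $\mathcal{E}[I]=v((\mathcal{M}^{(k)}_\epsilon)^c)$. But passing from $V(\underline{\mu}\in C(\{S_n/n\}))=1$ to a fixed $n$ is not valid as it stands. Being a cluster point is a tail property. In general a point can be a cluster point almost surely while, at every fixed $n$, the path is near it with probability only $1/2$. So Theorem~\ref{thm:SLLN} alone does not give an $n$ with $V(S_n/n\le\underline{\mu}+\delta)\ge 1-\epsilon$.

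\emph{The missing idea.} It rescues your first route, and it needs no compactness at all.
\begin{enumerate}
\item The error event lies in the cylinder event $A=\{S_n/n>\underline{\mu}+\delta\}$. So $I_A\in\mathcal{H}$, and $v(A)=-\mathbb{E}[-I_A]$ is determined by $\mathbb{E}$ alone, whatever measures the family contains.
\item Sequential independence and identical distribution give, by induction on $n$, $\mathbb{E}[\psi(X_1,\dots,X_n)]\ge E_{p_\theta^{\otimes n}}[\psi]$ for every $\theta$ and every bounded $\psi$. Each inner $\sup_{\theta'}\sum_x p_{\theta'}(x)\psi(\cdot,x)$ dominates the corresponding term for $\theta$.
\item Hence $v(A)\le p_\theta^{\otimes n}(A)$ for every $\theta$. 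Pick any $\theta$ with $\sum_x p_\theta(x)\log\frac{1}{V(x)}<\underline{\mu}+\delta/2$; attainment of the infimum is not needed. The classical weak law then gives $\mathcal{E}[I]\le v(A)\to 0$ along all $n$, not just a subsequence.
\end{enumerate}

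\textbf{Part 2.} Here your route genuinely differs from the paper's. The paper takes the empirical-entropy set $B_n=\{\boldsymbol{x}^n: H_{\hat p_{\boldsymbol{x}^n}}\le \hat H(X_1)+\gamma\}$, counts it by the method of types, and bounds $V(B_n^c)$ with a Sanov theorem under sublinear expectation. You reuse the $V$-typical set at level $\overline{\mu}$ and compare $\overline{\mu}\le\hat H(X_1)$; that comparison is correct.

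Your justification of $V(S_n/n>\overline{\mu}+\delta)\to 0$ fails, though. An upper capacity need not be continuous from above even for a weakly compact family. For $\{\delta_x\}_{x\in[0,1]}$ and $A_n=(0,1/n)$, one has $V(\limsup_n A_n)=0$ but $V(A_n)=1$. Weak compactness gives continuity from above only along decreasing closed sets, and $\bigcup_{m\ge n}\{S_m/m>\overline{\mu}+\delta\}$ is not of that type.

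Since you now need an upper bound on a supremum, no single product measure can help; use the multiplicative structure instead. For $t>0$, sequential independence gives $\mathbb{E}[e^{tS_n}]=(\mathbb{E}[e^{tY_1}])^n$. Therefore $V(S_n\ge n(\overline{\mu}+\delta))\le\bigl(e^{-t(\overline{\mu}+\delta)}\mathbb{E}[e^{tY_1}]\bigr)^n$, which decays exponentially for small $t$ when $Y_1$ is bounded. Alternatively, invoke Peng's weak law of large numbers.

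With that repair your Part 2 is valid, but note what it costs. Your counting rests on uniform boundedness. Under your own (correct) factorization $V(\boldsymbol{x}^n)=\prod_i V(x_i)$, that condition reads $(\sum_x V(x))^n\le M$ for all $n$. Since $\sum_x V(x)\ge\sum_x p_\theta(x)=1$, this forces $\sum_x V(x)=1$, i.e.\ $p_\theta=V$ on $\mathcal{X}$ for every $\theta$.

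So $V$-based counting (yours in both parts, the paper's in Part 1) only ever operates in the classical case. There $V=v=p^{\otimes n}$ on cylinder events, and Shannon's AEP already finishes everything. The paper's type-counting in Part 2 is the piece that does not use this hypothesis; it pays instead with a sublinear Sanov theorem.
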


\textit{Proof}: Due to $X_{1},X_{2},\cdots,X_{n},\cdots$ is a discrete IID source, for any $i=2,...,n$, the uncertain probability distributions of $X_i$ are also denoted as $\{p_{\theta}(X)\}_{\theta \in \Theta}$.

For 1), let $Y_i:=\log \frac{1}{V(X_i)} \in \mathcal{H}$, and $S_n:= \sum_{i=1}^{n}Y_i $. For any $\mu \in [-\mathbb{E}[-Y_i],\mathbb{E}[Y_i]]$ and any $\epsilon>0$, based on the strong laws of large numbers under sublinear expectations, there exists a monotonically increasing sequence $\{n_k\}$ such that
\begin{equation}
V\left(\left\{\omega \Big| |\frac{S_{n_k}(\omega)}{n_k}-\mu| \leq \epsilon \right\}\right) \geq 1-\epsilon.
\end{equation}
Let $\mathcal{M}_{\epsilon}^{(k)}:=\{\boldsymbol{x}^{n_k} \big| |\frac{-\log V(\boldsymbol{x}^{n_k})}{n_k}-\mu| \leq \epsilon \}$. Then, there is $V\left(\mathcal{M}_{\epsilon}^{(k)}\right) \geq 1-\epsilon$, and for any $\boldsymbol{x}^{n_k} \in \mathcal{M}_{\epsilon}^{(k)}$,
\begin{equation}
2^{-n_k (\mu+\epsilon)} \leq V(\boldsymbol{x}^{n_k}) \leq  2^{-n_k (\mu-\epsilon)}.
\end{equation}

Therefore, we can get
\begin{equation}
(1-\epsilon)2^{n_k(\mu-\epsilon)} \leq  \|\mathcal{M}_{\epsilon}^{(k)} \| \leq M 2^{n_k(\mu+\epsilon)}.
\end{equation}

As a result, for any $\delta >0$, there is a sufficiently large $k$ and the source encoding function 
assigns a unique index to each sequence in $\mathcal{M}_{\epsilon}^{(k)}$. This source encoding function is an one-to-one mapping, whose outputs are easily decodable and satisfy $\mathcal{E}\left[I_{\{\hat{\boldsymbol{X}}^{n_k}\neq \boldsymbol{X}^{n_k}\}}\right] = v\left((\mathcal{M}_{\epsilon}^{(k)})^{c}\right) < \epsilon$, and the source coding rate satisfies $R_{\textrm{s}} \geq  \underset{\theta \in \Theta}{\inf} \sum _{x}p_{\theta }(x)\log \frac{1}{V(x)} +\epsilon'$. Based on the arbitrary of $\mu$ and $\epsilon$, we can get the conclusion.

For 2), let let $\gamma>0$ s.t. $\hat{H}(X_1)+2\gamma < R_s$, and $B_n$ be the set
\begin{equation}
\left\{\boldsymbol{x}^{n} \Big| H_{\hat{p}_{\boldsymbol{x}^{n}}}\leq \hat{H}(X_1)+\gamma \right\},
\end{equation}
where $\hat{p}_{\boldsymbol{x}^{n}}$ is the empirical distribution of $\boldsymbol{x}^{n}$. Then, there is $|B_n|\leq 2^{nR_s}$ for sufficiently large $n$. Based on the Sanov Theorem under sublinear expectation (Section 3.1 in \cite{fuqing2012relative}), we can get
\begin{equation}
\limsup_{n\rightarrow \infty} \frac{1}{n}\log V((B_n)^c)\leq -\inf_{q\in F}H_{V}(q)<0,
\end{equation}
where $F$ is the set $\{q| \underset{\theta\in \Theta}{\inf} \| q-p_{\theta} \| \geq \delta \}$.

As a result, for any $\epsilon >0$, there is a sufficiently large $n$ and the source encoding function 
assigns a unique index to each sequence in $B_n$. This nonlinear source code satisfies
\begin{equation}
\mathbb{E}\left[I_{\{\hat{\boldsymbol{X}}^{n}\neq \boldsymbol{X}^{n}\}}\right] = V\left((B_n)^{c}\right) < \epsilon.
\end{equation}

This concludes the proof of Theorem \ref{thesource}. $\hfill\blacksquare$

It is worth noting that the source coding rate limit in the first statement of Theorem \ref{thesource} and its classical counterpart yield different performance limits under the minimum error probability criterion. Note that this result does not contradict the classical source coding theorem. In classical information theory, if the random variable is assumed to follow a specific probability distribution from a family of distributions $\{p_{\theta}(X)\}_{\theta\in \Theta}$ but the exact one is unknown, then the limit of source coding rate can be optimized to $\underset{\theta \in \Theta}{\inf} \underset{x}{\sum}p_{\theta }(x)\log \frac{1}{p_{\theta }(x)}$ under the minimum error probability criterion. By contrast, in this paper, the more general scenario and the relaxation of the traditional i.i.d. assumption make it possible to exceed the traditional source coding rate limit. Because $V(x)$ is greater than or equal to $p_{\theta}(x)$ for any $\theta\in \Theta$, we know that the value of $\underset{\theta \in \Theta}{\inf} \underset{x}{\sum} p_{\theta }(x)\log \frac{1}{V(x)}$ in the first statement of Theorem \ref{thesource} is smaller than or equal to the value of $\underset{\theta \in \Theta}{\inf} \underset{x}{\sum} p_{\theta }(x)\log \frac{1}{p_{\theta }(x)}$.

\section{Examples of the Bernoulli Type Uncertain-Distribution Sources}\label{sec5}

Consider a uncertain-distribution source $X$ with two input values $\{0,1\}$. Suppose that the probability of the event $\{X=0\}$ is uncertain and takes value in the interval $\Theta=[\max(p-\epsilon , 0),\min(p+\epsilon,1)]$ and $\textrm{Pr}(X=1)=1-\textrm{Pr}(X=0)$. Then the uncertain probability distributions of $X$ are represented by a family of probability distributions $\{p_{q}(X)\}_{q\in \Theta}$, which is expressed as
\begin{equation}
\{p_{q}(X)\}_{q\in \Theta}=\{p_q=\{q,1-q\}|q\in \Theta \}.
\end{equation}

For the uncertain-distribution source $X$, we visualize the results of Theorem \ref{thesource} in Fig. \ref{fig_2}. The dash-dotted lines represent the cluster point of the coding rate of the uncertain-distribution source with $\epsilon=0.02$ and $\epsilon=0.03$ under the minimum error probability criterion, and are calculated by $\underset{\theta \in \Theta}{\inf} \underset{x}{\sum}p_{\theta }(x)\log \frac{1}{V(x)}$. The dashed lines represent the nonlinear information entropy of the uncertain-distribution sources with $\epsilon=0.02$ and $\epsilon=0.03$, and are calculated by $\underset{\theta \in \Theta }{\sup } \underset{x}{\sum}p_{\theta }(x)\log \frac{1}{p_{\theta}(x)}$. For the convenience of comparison, we also draw the classical Shannon entropy of the information source without distribution uncertainty, which is corresponding to the case $\epsilon=0$ and represented by the solid line in Fig. \ref{fig_2}. This numerical example confirms that the cluster point we identify can lie below the classical Shannon entropy and that, for sufficiently large code lengths, nonlinear source coding outperforms source coding without distribution uncertainty under the minimum error probability criterion, demonstrating the practical relevance of the nonlinear source coding theorem.
\begin{figure}[tbp]
	\centering
	\includegraphics[scale=0.5]{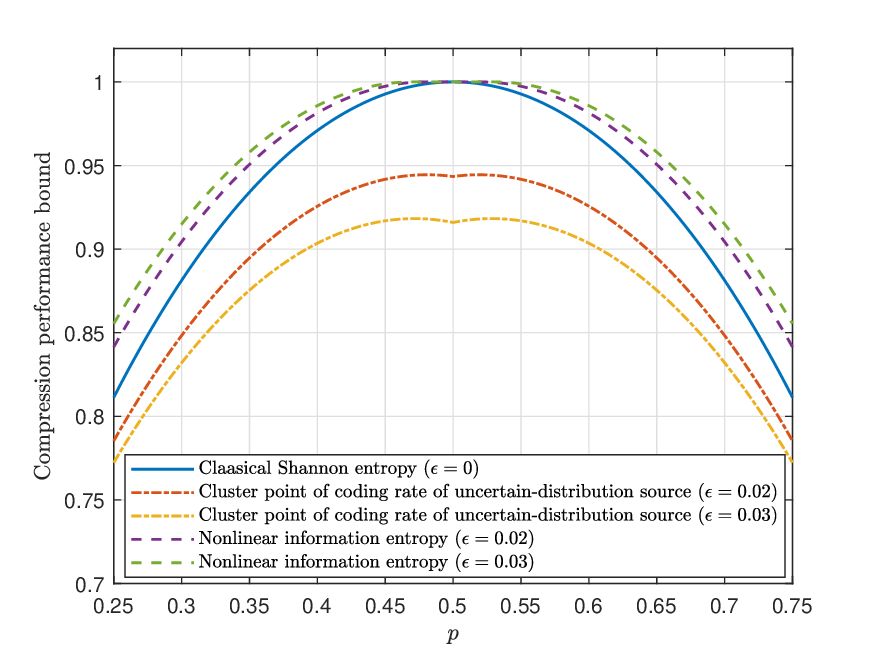}
	\captionsetup{font={scriptsize}}
	\caption{Comparison of the compression performance bound of a uncertain-distribution source and of the corresponding information source without distribution uncertainty.}
	\label{fig_2}
\end{figure}

\section{Conclusion}\label{sec6}

In this paper, we considered the uncertain-distribution sources, whose potential distributions are uncertain. The concepts of nonlinear information entropy are newly defined. Based on the strong laws of large numbers under sublinear expectation, we established the nonlinear source coding theorem, determining the upper bound of the achievable coding rate of uncertain-distribution sources under the maximum error probability criterion, and a cluster point of the coding rate of uncertain-distribution sources under the minimum error probability criterion. Note that when the distributions of source messages become deterministic (i.e., $\Theta$ is a singleton), the conclusions in this study degenerate into the conclusions in classical information theory. This work extends the analysis of information sources from being based on probability theory to being based on nonlinear expectation theory, which are potentially applicable in a wide range of real world problems.

\endgroup

\appendices

\end{document}